\newtheorem{theorem}{Theorem}
\newtheorem{lemma}{Lemma}
\numberwithin{equation}{section}
\begin{document}
\pagestyle{plain}




\title{\LARGE\textbf{Justification of the Lugiato-Lefever model from a damped driven $\phi^4$ equation}}

\author{{Fiki T. Akbar$^{\sharp}$, Bobby E. Gunara}$^{\sharp}$, Hadi Susanto$^{\flat}$, \\ \\
$^{\sharp}$\textit{\small Theoretical Physics Laboratory}\\
\textit{\small Theoretical High Energy Physics and Instrumentation Research Group,}\\
\textit{\small Faculty of Mathematics and Natural Sciences,}\\
\textit{\small Institut Teknologi Bandung}\\
\textit{\small Jl. Ganesha no. 10 Bandung, Indonesia, 40132}\\
{\small and} \\
$^{\flat}$\textit{\small Department of Mathematical Sciences, }\\
\textit{\small University of Essex,}\\
\textit{\small Colchester, CO4 3SQ, United Kingdom}\\
\small email: ftakbar@fi.itb.ac.id, bobby@fi.itb.ac.id, hsusanto@essex.ac.uk}

\date{}

\maketitle




\begin{abstract}
The Lugiato-Lefever equation is a damped and driven version of the well-known nonlinear Schr\"odinger equation. It is a mathematical model describing complex phenomena in dissipative and nonlinear optical cavities. Within the last two decades, the equation has gained a wide attention as it becomes the basic model describing optical frequency combs. Recent works derive the Lugiato-Lefever equation from a class of damped driven $\phi^4$ equations closed to resonance. In this paper, we provide a justification of the envelope approximation. From the analysis point of view, the result is novel  and non-trivial as the drive yields a perturbation term that is not square integrable. The main approach proposed in this work is to decompose the solutions into a combination of the background and the integrable component. This paper is the first part of a two-manuscript series.

\end{abstract}




\section{Introduction}
\label{sec1}

The Lugiato-Lefever equation is given by \cite{lugi87} 
\begin{equation}
\mathrm{i} A_{\tau} = - A_{\xi\xi}  - \frac{\mathrm{i}\alpha}{2} A - \frac{3\lambda}{2\omega} |A|^2 A + F,\quad \xi\in\mathbb{R},\,\tau\geq0, \label{NLSeq}
\end{equation}
which is nothing else but a damped driven nonlinear Schr\"odinger equation. It models spatiotemporal pattern formation in dissipative, diffractive and nonlinear optical cavities submitted to a continuous laser pump.  The same model was shown rather immediately to also appear in dispersive optical ring cavities \cite{hael92}. Lugiato-Lefever equation has raised a wide interest particularly following its recent succesful experimental application in the study of broadband microresonator-based optical frequency combs \cite{delh07,kipp11}, that has opened applicative avenues (see \cite{lugi15,chem17} for reviews on the subject).

Recently Ferr\'e et al.\ \cite{ferr17} showed that the dynamics of the Lugiato-Lefever equation can also be obtained from a driven dissipative sine-Gordon model. The former equation is a single envelope approximation, i.e., modulation equation, of the latter. Even in the region far from the conservative limit, where the approximation is expectedly no longer valid, they were reported to still exhibit qualitatively similar dynamical behaviors. 

Herein, instead of the sine-Gordon equation, we consider a nonlinear damped driven $\phi^4$ model 
\begin{equation}
u_{tt} + \epsilon^2 \alpha u_{t} - \beta u_{xx} + \gamma u - \lambda u^3 = \epsilon^3 h\left(e^{\mathrm{i}\Omega t} + e^{-\mathrm{i}\Omega t}\right), \label{NKGeq}
\end{equation}
where $\alpha, \beta, \gamma > 0$ and $\epsilon$ is a small positive parameter. The nonlinearity is considered to be 'softening', i.e., $\lambda < 0$. The 'hardening' case $\lambda > 0$ will be discussed in the second part of this paper series, whose results can be extended to the sine-Gordon equation. Introducing the slow time and space $\tau$ and $\xi$ defined as $\tau = \epsilon^2 t$ and $\xi = \epsilon\sqrt{\frac{2\omega^3}{\gamma \beta}}\left(x - vt\right)$, where $v = d\omega/dk = \beta k/\omega$ is a group velocity of the linear traveling wave and $k$ and $\omega$ are satisfying dispersion relation $\omega^2 = \beta k^2 + \gamma$, we define the slowly modulated ansatz function as
\begin{equation}
X(t,x) = \epsilon A(\tau,\xi)e^{\mathrm{i}(kx - \omega t)} + \frac{\lambda\epsilon^3}{9\beta k^2 - 9\omega^2 + \gamma}A(\tau,\xi)^3 e^{3\mathrm{i}(kx - \omega t)} + \mathrm{c.c}\:. \label{AnsatzFunction}
\end{equation}
The modulation amplitude $A$ is a complex valued function satisfying Eq.\ \eqref{NLSeq}, where $F(\tau,\xi) = - \frac{h}{2\omega} e^{-\mathrm{i}(\kappa \xi - \nu\tau)}$ with $\kappa = k/\epsilon$ and $\Omega = \gamma/\omega - \epsilon^2 \nu$. Inserting the ansatz function (\ref{AnsatzFunction}) into (\ref{NKGeq}), we get the residual terms
\begin{equation}
\label{residualterm}
\mathrm{Res}(t) = \mathcal{O}(\epsilon^4).
\end{equation}

The same modulation equation has been derived in early reports, e.g., in \cite{mora74,kaup78,noza84} to describe matters driven by an external ac field. Analytical studies of various solutions of the damped, driven continuous nonlinear Schr\"odinger equation have also been reported in \cite{24,5}. Nevertheless, despite the long history of the problem, a rigorous justification of the approximation is interestingly still lacking. The main challange is due to the external drive $F$ that is not integrable. Our work presented in this paper is to provide an answer to the missing piece.

An early work justifying the modulation equation without damping and drive was due to \cite{14,9}. The presence of external drives would not bring any problem should one consider nonlinear systems that correspond to a parabolic linear operator  \cite{14,coll90,hart91,schn94,schn94b}. In the context of Eq.\ \eqref{NKGeq}, this corresponds to $\alpha\to\infty$, in which case the modulation equation would be a Ginzburg-Landau-type equation, i.e., there is no factor $\mathrm{i}$ on the left hand side of \eqref{NLSeq}.

Recently we consider the reduction of a Klein-Gordon equation with external damping and drive into a damped driven discrete nonlinear Schr\"odinger equation \cite{muda18}. To overcome the nonintegrability of the solutions, we worked in a periodic domain. The present report extends our result in \cite{muda18} by proposing a method that works also in $L^2(\mathbb{R})$. 

This paper is organized as follows. To provide a rigorous proof justifying the modulation equation, we formulate our method in Section \ref{sec2} by decomposing the solutions into the background and particular parts. Using a small-amplitude appeoximation, we derive the Lugiato-Lefever equation from Section \ref{sec3} presents the local and global existence of homogeneous solutions of the amplitude equation. The main result on the error bound of the  approximation as time evolves is presented in Section \ref{sec4}.

\section{Solution decomposition}
\label{sec2}

Since the external drive term $F(\tau,\xi)$ is not integrable in the spatial variable, i.e., $F(\tau,\xi) \notin L^{p}(\mathbb{R})$ for any integer $ 1 \leq p < \infty$, in general $A(\tau,\xi)$ is also not integrable. Let $A_p(\tau,\xi)$ be a particular solution of equation (\ref{NLSeq}) which can be written as
\begin{equation}
A_p(\tau,\xi) = R\;e^{-\mathrm{i}(\kappa \xi - \nu\tau)}\:, \label{partsoln}
\end{equation}
where $R$ is a complex constant such that 
\begin{equation}
R  = -\frac{h}{2\omega}\;\frac{1}{\frac{3\lambda}{2\omega}|R|^2  - (\kappa^2 + \nu) +\frac{\mathrm{i}\alpha}{2}}\:,
\end{equation}
and $|R|^2$ satisfies cubic equation, 
\begin{equation*}
\frac{9\lambda^2}{4\omega^2}|R|^6 - \frac{6\lambda}{2\omega}(\kappa^2 + \nu) |R|^4 + \left[\frac{\alpha^2}{4} + (\kappa^2 + \nu)^2\right] |R|^2 - \frac{h^2}{4\omega^2} = 0\:.
\end{equation*}
Since $\lambda<0$, the cubic equation has only one real solution. Furthermore, we have
\begin{equation}
\|A_{p}(\tau,\xi)\|_{L^{\infty}(\mathbb{R})} = \sup_{\xi \in \mathbb{R}}\; |A_{p}(\tau,\xi)| = |R| \leq +\infty\:.
\end{equation}

To handle the non-integrability condition we can work on $\mathbb{T}$, rather than on $\mathbb{R}$ (see \cite{muda18} where a similar problem was considered in the discrete case). In this paper, we propose a different approach by introducing the decomposition, 
\begin{eqnarray}
\label{decomposition}
A(\tau,\xi) & := & e^{\mathrm{i}\nu \tau}\phi(\tau,\xi) + A_{p}(\tau,\xi) \nonumber \\
& = & e^{\mathrm{i}\nu \tau}\left[\phi(\tau,\xi) + \eta(\xi)\right],
\end{eqnarray}
where $\phi(\tau,\xi)$ is the integrable term 
and $\eta(\xi) = R e^{-\mathrm{i}\kappa \xi}$. The initial condition for our system is
\begin{equation}
A(0,\xi) = \varphi(\xi) + \eta(\xi)\:.
\label{tamb}
\end{equation}
with $\varphi \in H^{k}(\mathbb{R})$. 
Here, the space $H^k(\mathbb{R})$ with $k$ is a nonnegative integer denotes the Sobolev space with norm defined as
\begin{equation}
\|\phi\|_{H^{k}(\mathbb{R})} = \left[\sum_{i=0}^{k} \|D^{i}_{\xi}\phi\|^2_{L^2(\mathbb{R})}\right]^{1/2}\:,
\end{equation}
and $H^0(\mathbb{R}) = L^2(\mathbb{R})$ with
\begin{equation}
\|\phi\|_{L^{2}(\mathbb{R})} =\left[\int_{\mathbb{R}}\;|\phi(\xi)|^2\:d\xi\right]^{1/2}\:.
\end{equation}

The differential equation for $\phi$ is given by,
\begin{equation}
\mathrm{i} \phi_{\tau} = - \phi_{\xi\xi} - \frac{\mathrm{i}\alpha}{2}\phi - \left( \frac{3\lambda}{2\omega}|R|^2 - \nu\right)\phi + N(\phi)  \:, 
\label{NLSHomoDE}
\end{equation}
where the nonlinearity term is given by,
\begin{equation}
N(\phi) := - \frac{3\lambda}{2\omega}\left[|\phi + \eta|^2 - |\eta|^2\right](\phi + \eta)\:. \label{NonlinearPart}
\end{equation}
Using \eqref{tamb}, we have the initial condition 
\[\phi(0,\cdot) = \varphi.\]

\section{Local and Global Existence of the inhomogeneous Nonlinear Schr\"odinger equation}
\label{sec3}

In this section, we prove the local and global existence for the inhomogeneous part of the nonlinear Schr\"odinger equation. For an excellent review, an interested reader can further consult, for example, \cite{bourgain99,cazenave}. 

The local existence is stated in the following theorem.
\begin{theorem}
	\label{localsoln}
	
	Let $k \geq 1$ be an integer. For every $\varphi \in H^k(\mathbb{R})$, there exists a positive constant $\tau_m$ depending on the initial data and $k$ such that the differential equation (\ref{NLSHomoDE}) admits an unique maximal solution $\phi(\tau)$ on $[0,\tau_m)$ with
	\begin{equation}
	\phi \in C\left([0,\tau_m), H^k(\mathbb{R})\right)\:,
	\end{equation}
	and either, 
	\begin{enumerate}
		\item $\tau_m = +\infty$, and (\ref{NLSHomoDE}) admits a global solution, or
		\item $\tau_m < +\infty$, then $\|\phi(\tau)\|_{H^k(\mathbb{R})} \rightarrow \infty$ as $\tau \rightarrow \tau_m$ and the solution blow up at finite time $\tau_m$. Moreover, $\limsup \|\phi(\tau)\|_{L^{\infty}(\mathbb{R})}  \rightarrow \infty$ as $\tau \rightarrow \tau_m$.
	\end{enumerate}
\end{theorem}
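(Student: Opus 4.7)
The plan is to recast Eq.~(\ref{NLSHomoDE}) as an abstract semilinear evolution equation and to run a standard Banach fixed-point argument in $C([0,T];H^k(\mathbb{R}))$ for some small $T>0$. Setting
$$L\phi := \mathrm{i}\,\phi_{\xi\xi} - \tfrac{\alpha}{2}\phi + \mathrm{i}\bigl(\tfrac{3\lambda}{2\omega}|R|^2 - \nu\bigr)\phi,$$
the operator $L$ differs from the free Schr\"odinger generator $\mathrm{i}\partial_\xi^2$ by a bounded multiple of the identity, so $e^{\tau L}$ is a bounded $C_0$-group on every $H^k(\mathbb{R})$ and (\ref{NLSHomoDE}) is equivalent to the Duhamel formula
$$\phi(\tau) = e^{\tau L}\varphi - \mathrm{i}\int_0^\tau e^{(\tau - s)L}\,N(\phi(s))\,ds.$$

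The technical heart of the local theory is to verify that $N$ is locally Lipschitz from $H^k(\mathbb{R})$ to itself for every integer $k\geq 1$, in spite of the fact that $\eta\notin L^2(\mathbb{R})$. Expanding the nonlinearity gives
$$N(\phi) = -\frac{3\lambda}{2\omega}\Bigl[\,|\phi|^2\phi + 2|\phi|^2\eta + \phi^2\bar\eta + \bar\phi\,\eta^2 + |R|^2\phi\,\Bigr],$$
and since $\eta(\xi)=R\,e^{-\mathrm{i}\kappa\xi}$ lies in $C^\infty_b(\mathbb{R})$, multiplication by any polynomial in $\eta,\bar\eta$ is a bounded operator on $H^k(\mathbb{R})$ whose norm depends only on $|R|$ and $|\kappa|$. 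Combined with the Banach algebra property of $H^k(\mathbb{R})$ for $k\geq 1$, this yields
$$\|N(\phi)\|_{H^k}\leq C\bigl(\|\phi\|_{H^k}^3 + \|\phi\|_{H^k}^2 + \|\phi\|_{H^k}\bigr),$$
and the analogous Lipschitz estimate $\|N(\phi)-N(\psi)\|_{H^k}\leq C(M)\|\phi-\psi\|_{H^k}$ on balls of radius $M$. A contraction on the closed ball of radius $2\|\varphi\|_{H^k}$ in $C([0,T];H^k(\mathbb{R}))$ with $T=T(\|\varphi\|_{H^k},k)$ sufficiently small then delivers a unique mild solution, and a classical continuation argument extends it to a maximal interval $[0,\tau_m)$ satisfying the basic alternative $\tau_m=+\infty$ or $\|\phi(\tau)\|_{H^k}\to\infty$ as $\tau\nearrow\tau_m$.

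The most delicate point, and the one I expect to be the main obstacle, is the strengthening of the blow-up criterion to the $L^\infty$ statement $\limsup_{\tau\to\tau_m^-}\|\phi(\tau)\|_{L^\infty}=\infty$. My plan here is to differentiate $\|\phi\|_{H^k}^2$ along the flow and, using integration by parts, the Leibniz rule, and the Gagliardo--Nirenberg inequality, to establish an $L^\infty$-weighted energy inequality of the form
$$\frac{d}{d\tau}\|\phi(\tau)\|_{H^k}^2\leq C\bigl(1 + \|\phi(\tau)\|_{L^\infty}^2\bigr)\|\phi(\tau)\|_{H^k}^2,$$
whereupon Gronwall's lemma would imply that a uniform $L^\infty$ bound on $[0,\tau_m)$ forces a uniform $H^k$ bound, contradicting the first alternative. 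Handling the cubic piece $|\phi|^2\phi$ in this way is classical; the novelty is the presence of the quadratic and linear $\eta$-terms $|\phi|^2\eta$, $\phi^2\bar\eta$ and $\bar\phi\,\eta^2$, whose derivatives need not fall exclusively on $\phi$. Carefully distributing the derivatives via Leibniz and using the pointwise bounds $\|\partial_\xi^j \eta\|_{L^\infty}\leq C|\kappa|^j$ should allow one to absorb every factor either into $\|\phi\|_{L^\infty}$ or into $\|\phi\|_{H^k}$ with the correct power, but this bookkeeping is where the argument has to be set up with care.
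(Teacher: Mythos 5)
Your proposal follows essentially the same route as the paper: Duhamel's formula plus a Banach fixed-point argument in $C([0,T];H^k(\mathbb{R}))$, using the expansion of $N(\phi)$ and the fact that $\eta = R\,e^{-\mathrm{i}\kappa\xi}$ is smooth and bounded together with the algebra property of $H^k(\mathbb{R})$, followed by the standard continuation and blow-up alternative. For the $L^\infty$ criterion the paper argues through the Duhamel integral inequality and Gronwall rather than by differentiating $\|\phi\|_{H^k}^2$, but the key idea --- an a priori $L^\infty$ bound makes the nonlinear estimate linear in $\|\phi\|_{H^k}$, so Gronwall excludes $H^k$ blow-up --- is the same as yours.
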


\begin{proof}
	
	We prove the theorem in three steps.

	\textbf{Step 1. Local existence}. Using Duhamel's formula, we can write the solution of the differential equation (\ref{NLSHomoDE}) as
	\begin{equation}
	\label{integralEq}
	\phi(\tau) = U(\tau)\phi_{0} + \mathrm{i}\int_{0}^{\tau}\:U(\tau-\tau')\left[\frac{\mathrm{i}\alpha}{2}\phi(\tau') + \left( \frac{3\lambda}{2\omega}|R|^2 - \nu\right)\phi(\tau') - N(\phi(\tau'))\right]\:d\tau' \:,
	\end{equation}
	where $U(\tau)$ is one dimensional free Schr\"odinger time evolution operator. Define
	\begin{equation}
	\mathcal{B} = \left\{\phi\in C\left([0,\tilde{\tau}],H^k(\mathbb{R})\right) \left| \|\phi(\tau)\|_{H^k(\mathbb{R})} \leq M,\; \forall \tau \in [0,\tilde{\tau}) \right.\right\}
	\end{equation}
	be a Banach space equipped with norm,
	\begin{equation}
	\|\phi\|_{\mathcal{B}} = \sup_{\tau \in [0,\tilde{\tau})}\;\|\phi(\tau)\|_{H^k(\mathbb{R})}\:.
	\end{equation}
	For $\phi \in H^k(\mathbb{R})$, we define a nonlinear operator
	\begin{equation}
	\mathcal{K}[\phi](\tau) = U(\tau)\phi_0 +  \mathrm{i}\int_{0}^{\tau}\:U(\tau-\tau')\left[\frac{\mathrm{i}\alpha}{2}\phi(\tau') + \left( \frac{3\lambda}{2\omega}|R|^2 - \nu\right)\phi(\tau') - N(\phi(\tau'))\right]\:d\tau'.
	\end{equation}
	We want to prove that the operator $\mathcal{K}$ is a contraction mapping on $\mathcal{B}$.
	Using the fact that the free Schr\"odinger operator $U(\tau)$ is a linear operator and unitary on $H^k(\mathbb{R})$, we have
	\begin{equation}
	\|U(\tau)\phi\|_{H^k(\mathbb{R})} = \|\phi\|_{H^k(\mathbb{R})}\:,
	\end{equation}
	for any $\phi \in H^k(\mathbb{R})$, thus we get
	\footnotesize
	\begin{eqnarray}
	\|\mathcal{K}[\phi](\tau)\|_{H^k(\mathbb{R})} & \leq & \|U(\tau)\phi_0\|_{H^k(\mathbb{R})} + \int_{0}^{\tau}\:\left\|U(\tau-\tau')\left[\frac{\mathrm{i}\alpha}{2}\phi(\tau') + \left( \frac{3\lambda}{2\omega}|R|^2 - \nu\right)\phi(\tau') - N(\phi(\tau'))\right]\right\|_{H^k(\mathbb{R})}\:d\tau' \nonumber \\
	& \leq & \|\phi_0\|_{H^k(\mathbb{R})} + \int_{0}^{\tau}\:\left\|\left[\frac{\mathrm{i}\alpha}{2}\phi(\tau') + \left( \frac{3\lambda}{2\omega}|R|^2 - \nu\right)\phi(\tau') - N(\phi(\tau'))\right]\right\|_{H^k(\mathbb{R})}\:d\tau'\nonumber\\
	& \leq & \|\phi_0\|_{H^k(\mathbb{R})} + \left(\frac{\alpha}{2} + \frac{3|\lambda|}{2\omega}|R|^2 + |\nu|\right)\tau \delta + \tau \sup_{\tau' \in [0,\tau]} \;\|N(\phi(\tau'))\|_{H^k(\mathbb{R})}
	\end{eqnarray}
	\normalsize
	Note that,
	\begin{equation}
	\left[|\phi + \eta|^2 - |\eta|^2\right](\phi + \eta) = |\phi|^2\phi + 2|\phi|^2 \eta + |\eta|^2\phi + \eta^2 \bar{\phi} + \phi^2 \bar{\eta}\:.
	\end{equation}
	Since $H^{k}(\mathbb{R})$ is an algebra for $k > 1/2$, thus for $\tau \in [0,\tilde{\tau})$ we get
	\begin{eqnarray}
	\|N(\phi(\tau))\|_{H^k(\mathbb{R})} & \leq & \frac{3|\lambda|}{2\omega}\left(\|\phi\|^3_{H^k(\mathbb{R})} + 3|R|\|\phi\|^2_{H^k(\mathbb{R})} + 2|R|^2\|\phi\|_{H^k(\mathbb{R})}\right) \nonumber\\
	& \leq & \frac{3|\lambda|M}{2\omega}\left(M^2 + 3|R|M + 2|R|^2\right)\:.
	\end{eqnarray}
	Assuming that $\|\phi_0\|_{H^k(\mathbb{R})} < \delta/2$, hence we have,
	\begin{equation}
	\|\mathcal{K}[\phi](\tau)\|_{\mathcal{B}} \leq \frac{M}{2} + \frac{\tilde{\tau} M}{2}\left[\alpha + 2|\nu| +  \frac{3|\lambda|}{\omega}\left(3|R|^2 + 3|R|M + M^2\right)\right]\:.
	\end{equation}
	If we pick,
	\begin{equation}
	\tilde{\tau} < \frac{1}{\alpha + 2|\nu| + \frac{3|\lambda|}{\omega}\left(3|R|^2 + 3|R|M + M^2\right)}\:,
	\end{equation}
	then $\mathcal{K}$ map $\mathcal{B}$ to itself.

	Let $\phi,\varphi \in \mathcal{B}$, then
	{\footnotesize
	\begin{eqnarray}
	\|\mathcal{K}[\phi](\tau) - \mathcal{K}[\varphi](\tau)\|_{H^k(\mathbb{R})} & \leq & \int_{0}^{\tau}\:\left\|\left(\frac{\mathrm{i}\alpha}{2} + \frac{3\lambda}{2\omega}|R|^2 - \nu\right)(\phi(\tau') - \varphi(\tau')) - (N(\phi(\tau')) - N(\varphi(\tau')))\right\|_{H^k(\mathbb{R})}\:d\tau' \nonumber\\
	& \leq & \int_{0}^{\tau}\:\left(\frac{\alpha}{2} + \frac{3|\lambda|}{2\omega}|R|^2 + |\nu|\right)\left\|(\phi(\tau') - \varphi(\tau')) \right\|_{H^s(\mathbb{R})}\;d\tau' \nonumber \\
	& & \qquad + \int_{0}^{\tau}\:\|N(\phi(\tau')) - N(\varphi(\tau'))\|_{H^k(\mathbb{R})}\:d\tau' \nonumber\\
	& \leq & \tau \left(\frac{\alpha}{2} + \frac{3|\lambda|}{2\omega}|R|^2 + |\nu|\right) \sup_{\tau' \in [0,\tau]} \left\|(\phi(\tau') - \varphi(\tau')) \right\|_{H^k(\mathbb{R})} \nonumber \\
	&  & \qquad + \tau \sup_{\tau' \in [0,\tau]}\|N(\phi(\tau')) - N(\varphi(\tau'))\|_{H^k(\mathbb{R})}
	\end{eqnarray}
	}
	Note we have the following inequalities,
	\begin{eqnarray}
	|\phi^2 - \varphi^2| & = & |\phi + \varphi||\phi - \varphi| \leq (|\phi| + |\varphi|)|\phi - \varphi|, \nonumber\\
	||\phi|^2 - |\varphi|^2| & = &|\phi\bar{\phi} - \varphi\bar{\varphi}| = |\phi(\bar{\phi} - \bar{\varphi}) + \bar{\varphi}(\phi - \varphi)| \leq (|\phi| + |\varphi|)|\phi - \varphi|, \nonumber \\
	||\phi|^2\phi - |\varphi|^2\varphi| & = & |(|\phi|^2 + |\varphi|^2)(\phi-\varphi) + \phi\varphi(\bar{\phi}-\bar{\varphi})| \leq \frac{3}{2}(|\phi|^2 + |\varphi|^2)|\phi-\varphi|. \nonumber
	\end{eqnarray}
	Thus, for $\tau \in [0,\tilde{\tau}]$ we get
	\begin{equation}
	\|N(\phi(\tau)) - N(\varphi(\tau))\|_{H^k(\mathbb{R})} \leq  \frac{3|\lambda|}{2\omega}\left(2|R|^2 + 6|R|M + 3M^2\right)\|\phi(\tau) - \varphi(\tau)\|_{H^k(\mathbb{R})}.
	\end{equation}
	Hence, we have
	\begin{equation}
	\|\mathcal{K}[\phi](\tau) - \mathcal{K}[\varphi](\tau)\|_{\mathcal{B}}  \leq \frac{\tilde{\tau}}{2}\left[\alpha + 2|\nu| +  \frac{9|\lambda|}{\omega}\left(|R|^2 + 2|R|M + M^2\right)\right]\|\phi(\tau) - \varphi(\tau)\|_{\mathcal{B}}
	\end{equation}
	If we pick,
	\begin{equation}
	\tilde{\tau} < \min\left[\frac{1}{\alpha + 2|\nu| +  \frac{3|\lambda|}{\omega}\left(3|R|^2 + 3|R|M + M^2\right)}, \frac{2}{\alpha + 2|\nu| + \frac{9|\lambda|}{\omega}\left(|R|^2 + 2|R|M + M^2\right)}\right]\:,
	\end{equation}
	then the nonlinear operator $\mathcal{K}$ is a contraction mapping in $\mathcal{B}$. Therefore by Banach fixed point theorem, there exist a fixed point of $\mathcal{K}$ which is a solution of (\ref{integralEq}), hence of (\ref{NLSHomoDE}).

	\textbf{Step 2. Uniqueness} Let $\phi,\tilde{\phi} \in C\left([0,\tilde{\tau}), H^{k}(\mathbb{R}) \right)$ are solutions of (\ref{NLSHomoDE}) with the same initial condition $\varphi \in H^{k}(\mathbb{R})$. Let $\eta = \tilde{\phi} - \phi$. By Duhamel's formula, we have
	\begin{equation}
	\eta(\tau) = \mathrm{i}\int_{0}^{\tau}\:U(\tau-\tau')\left[\frac{\mathrm{i}\alpha}{2}\eta(\tau') + \left( \frac{3\lambda}{2\omega}|R|^2 - \nu\right)\eta(\tau') - \left(N(\tilde{\phi}(\tau')) - N(\phi(\tau'))\right)\right]\:d\tau'\:.
	\end{equation}
	Taking norm in $H^{k}(\mathbb{R})$ and using the property of the nonlinear term as in the local existence proof, we get
	\begin{equation}
	\|\eta(\tau)\|_{H^{k}(\mathbb{R})} \leq C(M)\int_0^{\tau}\:\|\eta(\tau')\|_{H^{k}(\mathbb{R})}\;d\tau'\:.
	\end{equation}
	By Gronwall's inequality, we conclude that $\|\eta(\tau)\|_{H^{k}(\mathbb{R})} = 0$ for all $\tau \in [0,\tilde{\tau})$, hence $\phi' = \phi$.
	
	\textbf{Step 3. Maximal solution.} We can construct the maximal solution by repeating the step 1 with the initial condition $\phi(\tilde{\tau}-\tau_0)$ for some $0< \tau_0 <\tilde{\tau}$ and using the uniqueness condition to glue the solutions. Clearly, if $\tau_m = +\infty$, then we have a global solution and if $\tau_m < +\infty$, then $\|\phi(\tau)\|_{H^k(\mathbb{R})} \rightarrow \infty$ as $\tau \rightarrow \tau_m$.  
	
	Finally, we will show that if $\tau_m < +\infty$, then $\limsup \|\phi_{\tau}\|_{L^\infty(\mathbb{R})} \rightarrow \infty$ as $\tau \rightarrow \tau_m$. Suppose that $\limsup \|\phi_{\tau}\|_{L^\infty(\mathbb{R})} < \infty$ as $\tau \rightarrow \tau_m$. Since $\phi \in C\left([0,\tau_m), H^{k}(\mathbb{R}) \right)$, and $H^k$ is embedded to $L^{\infty}$ for $k\geq1$, then
	\begin{equation*}
	\sup_{\tau \in [0,\tau_m)} \|\phi(\tau)\|_{L^\infty(\mathbb{R})} \leq \sup_{\tau \in [0,\tau_m)} \|\phi(\tau)\|_{H^k(\mathbb{R})} \leq M\:.
	\end{equation*}
	Using Duhamel's formula using the property of the nonlinear term as in the local existence proof, we get
	\begin{equation}
	\|\phi(\tau)\|_{H^k(\mathbb{R})} \leq \|\varphi\|_{H^k(\mathbb{R})} + C(M)\int_{0}^{\tau}\:\|\phi(\tau')\|_{H^k(\mathbb{R})}\;d\tau'\:.
	\end{equation}
	Applying Gronwall's inequality, then for $\tau \in [0,\tau_m)$, we have
	\begin{equation}
	\|\phi(\tau)\|_{H^k(\mathbb{R})} \leq \|\varphi\|_{H^k(\mathbb{R})}e^{C(M)\tau_m}\:,
	\end{equation}
	which contradicts with the blow up of $\|\phi(\tau)\|_{H^k(\mathbb{R})}$ at $\tau \rightarrow \tau_m$. Hence, $\limsup \|\phi_{\tau}\|_{L^\infty(\mathbb{R})} \rightarrow \infty$ as $\tau \rightarrow \tau_m$.  
\end{proof}

Due to the type of nonlinearity and presence of the damping term, differential equation (\ref{NLSHomoDE}) does not possess any conserved quantities. However, we can define the energy function associated to equation (\ref{NLSHomoDE}) as
\begin{equation}
E[\phi](\tau) = \int_{\mathbb{R}}\;\left\{|\phi_{\xi}|^2 - \left( \frac{3\lambda}{2\omega}|R|^2 - \nu\right)|\phi|^2 - \frac{3\lambda}{4\omega}\left[|\phi + \eta|^2 - |\eta|^2\right]^2  \right\}\;d\xi\:.
\label{energyfunction}
\end{equation}
Since $\lambda < 0$, then $E$ is non-negative. It is worth mentioning that if $\alpha = 0$ (no damping term), then this energy function $E$ is conserved. Furthermore, we still can use this energy function to prove the global existence in $H^{1}(\mathbb{R})$ for small initial energy and prove that the solution is in fact possess more regularity and prove the global existence on $H^{k}(\mathbb{R})$.

Now, we prove the following lemma about energy estimate
\begin{lemma}
	Let $\phi$ be a solution of differential equation (\ref{NLSHomoDE}) with initial data $\varphi \in H^{1}(\mathbb{R})$ such that $E_0 = E[\varphi] \leq \delta$ with $\delta$ is a positive real constant. There exist a positive real constant $\delta_0$ such that for every $0 < \delta < \delta_0$ we have the following estimate,
	\begin{equation}
	E[\phi](\tau) \leq K e^{-\alpha\tau}\:, \label{energydecay}
	\end{equation} 
	where $K$ is a real positive constant depending on the initial data.
\end{lemma}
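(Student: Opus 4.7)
The plan is to derive a differential inequality for $E(\tau)$ of the form $E' + \alpha E \leq \Phi(E)$ with $\Phi(E) = o(E)$ as $E\to 0$, and then use a bootstrap argument with the integrating factor $e^{\alpha\tau}$ to upgrade this to the exponential decay \eqref{energydecay}.

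First I would differentiate $E(\tau)$ defined in \eqref{energyfunction}. After integration by parts in the $|\phi_\xi|^2$ term, the integrand takes the form $2\mathrm{Re}\{[\cdots]\phi_\tau\}$, where the bracket equals $-\mathrm{i}\overline{\phi_\tau} - \tfrac{\mathrm{i}\alpha}{2}\bar\phi$ by the conjugate of \eqref{NLSHomoDE}. All $|\phi_\tau|^2$ contributions cancel and one is left with
\[
\frac{dE}{d\tau} = \alpha \int_{\mathbb{R}} \mathrm{Im}[\bar\phi \phi_\tau]\,d\xi.
\]
Substituting the equation for $\phi_\tau$ back into this identity and taking imaginary parts, the quadratic pieces recombine exactly into $-\alpha E$, and what survives is the mismatch caused by the background $\eta$:
\begin{equation}
\label{planEderiv}
\frac{dE}{d\tau} + \alpha E = \frac{3\alpha\lambda}{4\omega}\int_{\mathbb{R}} \bigl[|\phi+\eta|^2 - |\eta|^2\bigr]|\phi|^2\,d\xi.
\end{equation}

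The right-hand side of \eqref{planEderiv} is estimated by expanding $|\phi+\eta|^2 - |\eta|^2 = |\phi|^2 + 2\mathrm{Re}[\bar\phi \eta]$ and using $|\eta|\equiv|R|$, which bounds the integrand by $|\phi|^4 + 2|R|\,|\phi|^3$. Because each of the three integrands in $E$ is non-negative (granting the standing assumption $\nu - \tfrac{3\lambda}{2\omega}|R|^2 \geq 0$ needed for $E\geq 0$), there is a constant $c>0$ with $\|\phi\|_{H^1(\mathbb{R})}^2 \leq c\, E$. Combining this coercivity with the one-dimensional Gagliardo--Nirenberg inequalities $\|\phi\|_{L^4(\mathbb{R})}^4 \leq C\|\phi\|_{L^2(\mathbb{R})}^3\|\phi_\xi\|_{L^2(\mathbb{R})}$ and $\|\phi\|_{L^3(\mathbb{R})}^3 \leq C\|\phi\|_{L^2(\mathbb{R})}^{5/2}\|\phi_\xi\|_{L^2(\mathbb{R})}^{1/2}$ yields
\[
\frac{dE}{d\tau} + \alpha E \leq C\alpha\bigl(|R|\,E^{3/2} + E^2\bigr).
\]

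Setting $\widetilde E(\tau)=e^{\alpha\tau}E(\tau)$ converts this into $\widetilde E'(\tau) \leq C\alpha\bigl(|R|\widetilde E^{3/2} e^{-\alpha\tau/2} + \widetilde E^2 e^{-\alpha\tau}\bigr)$. The bootstrap is then standard: assuming provisionally that $\widetilde E(\tau) \leq 2\delta$ on $[0,T^*)$, one integrates to obtain
\[
\widetilde E(\tau) \leq \delta + C_1|R|(2\delta)^{3/2} + C_2(2\delta)^2 \quad \text{for all } \tau\in[0,T^*),
\]
with constants independent of $\tau$. Choosing $\delta_0$ so that $C_1|R|(2\delta_0)^{3/2} + C_2(2\delta_0)^2 < \delta_0$ yields the strict improvement $\widetilde E(\tau) < 2\delta$ on $[0,T^*)$, so by continuity $T^*=+\infty$, and \eqref{energydecay} holds with $K=2\delta$. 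The main obstacle is that the correction in \eqref{planEderiv} is not sign-definite: the term linear in the background amplitude, $|R|\,E^{3/2}$, would eat the damping if $E$ were not already small, and it is precisely this term that forces the smallness hypothesis $E_0\leq\delta_0$ and rules out a direct Gronwall argument.
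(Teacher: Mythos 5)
Your proposal is correct and follows essentially the same route as the paper: the identity $\frac{dE}{d\tau}+\alpha E = \frac{3\alpha\lambda}{4\omega}\int_{\mathbb{R}}\left[|\phi+\eta|^2-|\eta|^2\right]|\phi|^2\,d\xi$ is exactly the paper's starting point, and the correction term is likewise controlled by superlinear powers of $E$ via Gagliardo--Nirenberg together with the smallness of $E_0$ (both arguments also share the same tacit coercivity assumption that $E$ controls $\|\phi\|_{H^1}^2$). The remaining differences are cosmetic: the paper bounds the right-hand side by $\alpha C E^{3/2}$ (reusing the quartic part of $E$ via Cauchy--Schwarz plus Young) and integrates the resulting Bernoulli-type inequality explicitly with $\delta_0 = 4/C^2$, whereas you split it as $|R|E^{3/2}+E^2$ and close with a continuity/bootstrap argument after multiplying by $e^{\alpha\tau}$.
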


\begin{proof}
	First, we multiply equation (\ref{NLSHomoDE}) with $-\bar{\phi}_{\tau} - \frac{\alpha}{2} \bar{\phi}$, integrate over the spatial variable $\xi$ and keep the real part. Then, we get
	\begin{equation}
	\frac{d E[\phi](\tau)}{d\tau} + \alpha E[\phi](\tau) = \frac{3\alpha\lambda}{4\omega} \int_{\mathbb{R}}\:\left[|\phi + \eta|^2 - |\eta|^2\right]|\phi|^2\;d\xi\;,
	\end{equation}
	Since $\alpha >0$, using Cauchy-Schwartz we get
	\begin{eqnarray}
	\frac{3\alpha\lambda}{4\omega} \int_{\mathbb{R}}\:\left[|\phi + \eta|^2 - |\eta|^2\right]|\phi|^2\;d\xi & \leq & \alpha \left[\frac{3|\lambda|}{4\omega} \int_{\mathbb{R}}\:\left[|\phi + \eta|^2 - |\eta|^2\right]^2\;d\xi\right]^{\frac{1}{2}} \left[\frac{3|\lambda|}{4\omega} \int_{\mathbb{R}}\:|\phi|^4 \;d\xi\right]^{\frac{1}{2}} \nonumber \\
	& \leq & \alpha E^{\frac{1}{2}}\left\|\frac{3|\lambda|}{4\omega}\phi\right\|^{2}_{L^{4}(\mathbb{R})}\:.
	\end{eqnarray}
	Using the one dimensional Gagliardo-Nirenberg-Sobolev inequality,
	\begin{equation}
	\left\|\frac{3|\lambda|}{4\omega}\phi\right\|_{L^{4}(\mathbb{R})} \leq C\|\phi_{\xi}\|^{1/4}_{L^{2}(\mathbb{R})}\left\|\frac{3|\lambda|}{4\omega}\phi\right\|^{3/4}_{L^{2}(\mathbb{R})}
	\end{equation}
	and Young inequality,
	\begin{equation}
	a b \leq p\; a^{\frac{1}{p}} + (1-p)b^{\frac{1}{1-p}}
	\end{equation}
	for $a,b >0$ and $p \in (0,1)$, we get
	\begin{equation}
	\left\|\frac{3|\lambda|}{4\omega}\phi\right\|_{L^{4}(\mathbb{R})} \leq C\left[\|\phi_{\xi}\|^{2}_{L^{2}(\mathbb{R})} + \frac{3|\lambda|}{2\omega}|R|^2 \|\phi\|^2_{L^{2}(\mathbb{R})} \right]^{1/2} \leq C\;E^{1/2}.
	\end{equation}
	Thus, we have
	\begin{equation}
	\frac{d E(\tau,\phi)}{d\tau} + \alpha E(\tau,\phi) \leq \alpha C E^{3/2}.
	\end{equation}
	Integrating this inequality, we get
	\begin{equation}
	E(\tau) \leq \frac{4 E_0}{\left[C\sqrt{E_0}(1 - e^{\alpha \tau/2}) + 2\;e^{\alpha \tau/2}\right]^2}\:. 
	\end{equation}
	Pick $\delta_0 = 4/C^2$, then we have the desired inequality (\ref{energydecay}).
	
\end{proof}

Now, we prove the global existence in the following theorem.
\begin{theorem}
	\label{globalexistence}
	Let $k \geq 1$ be an integer. For every $\varphi \in H^k(\mathbb{R})$ such that $E_0 = E[\varphi] \leq \delta$ with $\delta$ is a positive real constant. There exists a positive real constant $\delta_0$ such that for every $0 < \delta < \delta_0$, the differential equation (\ref{NLSHomoDE}) admits a unique global solution $\phi$ which belongs to
	\begin{equation}
	\phi \in C\left([0,+\infty), H^k(\mathbb{R})\right)\:.
	\end{equation}
	
\end{theorem}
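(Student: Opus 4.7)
The strategy is to combine the blow-up alternative from Theorem~\ref{localsoln} with the exponential energy decay supplied by the preceding lemma. I would first close the critical case $k=1$, in which $E$ directly controls the $H^1$-norm, and then bootstrap to arbitrary $k\geq 2$ by a Duhamel--Gronwall argument, using the blow-up alternative in both cases to conclude $\tau_m=+\infty$.

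For the base case, observe that under $\lambda<0$ and in the parameter regime where $-\frac{3\lambda}{2\omega}|R|^2 + \nu > 0$, each of the three contributions in the definition (\ref{energyfunction}) of $E$ is non-negative, and the first two together dominate $\|\phi\|_{H^1(\mathbb{R})}^2$. The energy decay lemma then yields a uniform bound $\|\phi(\tau)\|_{H^1(\mathbb{R})} \leq C$ on $[0,\tau_m)$, hence, via the one-dimensional embedding $H^1(\mathbb{R}) \hookrightarrow L^\infty(\mathbb{R})$, also a uniform bound on $\|\phi(\tau)\|_{L^\infty(\mathbb{R})}$. The blow-up alternative of Theorem~\ref{localsoln} then forbids $\tau_m<+\infty$, which already delivers global existence in $H^1(\mathbb{R})$.

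For $k\geq 2$, I would return to the Duhamel representation (\ref{integralEq}) and iterate the $H^k$-estimates from Step~1 of the proof of Theorem~\ref{localsoln}. The essential observation is that although $\eta(\xi) = R e^{-\mathrm{i}\kappa\xi}$ is not integrable, it is smooth with all derivatives uniformly bounded on $\mathbb{R}$; combined with the uniform $L^\infty$-bound on $\phi$ secured above and the algebra property of $H^k(\mathbb{R})$ for $k\geq 1$, the nonlinearity admits the time-independent estimate $\|N(\phi(\tau))\|_{H^k(\mathbb{R})} \leq C_k\,\|\phi(\tau)\|_{H^k(\mathbb{R})}$. Since $U(\tau)$ is unitary on $H^k(\mathbb{R})$, Duhamel followed by Gronwall delivers $\|\phi(\tau)\|_{H^k(\mathbb{R})} \leq \|\varphi\|_{H^k(\mathbb{R})}\,e^{C_k'\tau}$ on every bounded time interval, and the blow-up alternative again rules out finite-time blow-up.

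The main obstacle is precisely the non-integrability of $\eta$ that motivated the decomposition (\ref{decomposition}) to begin with: a direct energy method applied to $A$ would fail because $A\notin L^2(\mathbb{R})$, and the $H^k$-estimates for $N(\phi)$ must be executed carefully so that derivatives falling on $\eta$ produce only bounded, time-independent coefficients — which is exactly the feature the ansatz $\eta(\xi)=Re^{-\mathrm{i}\kappa\xi}$ guarantees. The smallness threshold $\delta_0$ appearing in the statement is what secures the coercivity of $E$ on $H^1$ in the base case, and it is ultimately this threshold that constrains the admissible initial data.
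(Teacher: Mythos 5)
Your proposal is correct and follows essentially the same route as the paper: the energy-decay lemma gives a uniform $H^1$ (hence $L^\infty$) bound that settles $k=1$, and for $k\geq 2$ the uniform $L^\infty$ control combined with the blow-up alternative of Theorem~\ref{localsoln} excludes finite-time blow-up, your inline Duhamel--Gronwall tame estimate being precisely the argument the paper packages as the $L^\infty$ blow-up criterion in Step~3 of that theorem. Your explicit coercivity condition $\nu - \frac{3\lambda}{2\omega}|R|^2 > 0$ is a detail the paper leaves implicit but does need.
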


\begin{proof}
	
	First consider the case of $k=1$. Pick $\delta_0$ as in the previous theorem, then the global existence is directly follow from (\ref{energydecay}). Thus, we have
	\begin{equation}
	\label{GEH1}
	\phi \in C\left([0,+\infty), H^1(\mathbb{R})\right)\:.
	\end{equation}

	Now consider the case of $k>1$. In Theorem \ref{localsoln}, we already construct a unique local maximal solution such that,
	\begin{equation}
	\phi \in C\left([0,\tau^k_m), H^k(\mathbb{R})\right)\:.
	\end{equation}
	We need to prove that $\tau^k_m = +\infty$. Consider $\tau_0 <+\infty$, then we have
	\begin{equation}
	\sup_{\tau \in [0,\tau_0]} \|\phi(\tau)\|_{H^{1}(\mathbb{R})} < \infty\:.
	\end{equation}
	Since $H^{1}(\mathbb{R}) \hookrightarrow L^{\infty}(\mathbb{R})$, then (\ref{GEH1}) implies
	\begin{equation}
	\sup_{\tau \in [0,\tau_0]} \|\phi(\tau)\|_{L^{\infty}(\mathbb{R})} < \infty\:.
	\end{equation}
	Applying the blow up alternative in the local existence theorem, we deduce that $\tau_m^k > \tau_0$. Since $\tau_0$ is arbitrary, then we conclude $\tau_m^k = +\infty$ and the proof is finished.
	
\end{proof}

\section{Main Result}
\label{sec4}

Before we state the main result, we will prove a lemma about the bound of the leading approximation function and the residual function. 

\begin{lemma}
	\label{lemmaA}
	For every $A(0) = \varphi + \eta$, where $\varphi \in H^{k}(\mathbb{R})$ with integer $k > 4$ such that $E[\varphi]$ is small, then there exists a positive real constant $C_X$ and $C_R$ that depend on $\|A(0)\|_{L^\infty(\mathbb{R})}$ such that
	\begin{eqnarray}
	\|X_t(t)\|_{L^\infty(\mathbb{R})} + \|X(t)\|_{L^\infty(\mathbb{R})} & \leq & \epsilon\;C_X,\nonumber\\
	\|\mathrm{Res}(t)\|_{L^\infty(\mathbb{R})} & \leq & \epsilon^4 \:C_R\:,
	\end{eqnarray}
	for all $t \in [0,+\infty)$. Furthermore, we also have $X(t) \in C_b^{k-1}(\mathbb{R})$ and $X_t(t) \in C_b^{k-3}(\mathbb{R})$ for all $t \in [0,+\infty)$.  
\end{lemma}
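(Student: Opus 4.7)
The plan is to push $L^\infty$ estimates on $A$ and its low-order derivatives through the explicit ansatz \eqref{AnsatzFunction} to obtain all the stated bounds. Since $A=e^{\mathrm{i}\nu\tau}(\phi+\eta)$ with $|\eta(\xi)|\equiv|R|$, every estimate on $A$ reduces to a uniform-in-$\tau$ estimate on $\phi$ in a suitable Sobolev space. Theorem \ref{globalexistence} already produces a global solution $\phi\in C([0,+\infty),H^k(\mathbb{R}))$ for $E[\varphi]$ small, and the energy decay \eqref{energydecay} gives $\|\phi(\tau)\|_{H^1}\to 0$ exponentially, so $\|\phi\|_{L^\infty}$ is automatically controlled uniformly in $\tau$ by Sobolev embedding.

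\textbf{Uniform $H^k$ control and derivatives of $A$.} The first substantial step is to promote this to a uniform-in-$\tau$ bound on $\|\phi(\tau)\|_{H^k}$. Apply $\partial_\xi^j$ to \eqref{NLSHomoDE} for $j\leq k$, pair with $\partial_\xi^j\phi$ in $L^2$, and take the imaginary part. The damping yields the dissipative contribution $-\alpha\|\partial_\xi^j\phi\|_{L^2}^2$, while the nonlinearity produces commutator terms that, by the algebra property of $H^k$, the $H^1$ decay of $\phi$, and smallness of $|R|$, can be absorbed into the damping. Gronwall then delivers $\sup_{\tau\geq 0}\|\phi(\tau)\|_{H^k}\leq C(\|\varphi\|_{H^k})$. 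The one-dimensional embedding $H^k(\mathbb{R})\hookrightarrow C_b^{k-1}(\mathbb{R})$ gives $A(\tau,\cdot)\in C_b^{k-1}$ uniformly in $\tau$, and reading $A_\tau$ off \eqref{NLSHomoDE} costs two spatial derivatives, yielding $A_\tau\in C_b^{k-3}$.

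\textbf{Bounds on $X$ and $X_t$.} The bound on $X$ follows immediately from \eqref{AnsatzFunction}: the leading term $\epsilon Ae^{\mathrm{i}(kx-\omega t)}+\mathrm{c.c.}$ is bounded by $2\epsilon\|A\|_{L^\infty}$ and the cubic correction is of size $\mathcal{O}(\epsilon^3)\|A\|_{L^\infty}^3$, giving $\|X(t)\|_{L^\infty}\leq\epsilon C_X$. For $X_t$, the chain rule produces $\partial_t=\epsilon^2\partial_\tau-\epsilon v\sqrt{2\omega^3/(\gamma\beta)}\,\partial_\xi$ on the slow variables together with $-\mathrm{i}\omega$ on the fast phase, so the dominant term is $-\mathrm{i}\omega\epsilon Ae^{\mathrm{i}(kx-\omega t)}+\mathrm{c.c.}$, again $\mathcal{O}(\epsilon)$, with the remaining pieces controlled by $\|A_\tau\|_{L^\infty}$ and $\|A_\xi\|_{L^\infty}$, both uniformly bounded by the previous step.

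\textbf{Residual and main obstacle.} For the residual, substitute \eqref{AnsatzFunction} into \eqref{NKGeq}; by construction of the ansatz (dispersion relation for the $\epsilon e^{\pm\mathrm{i}\theta}$ balance, the denominator $9\beta k^2-9\omega^2+\gamma$ for the $\epsilon^3 e^{\pm 3\mathrm{i}\theta}$ balance, and the Lugiato-Lefever equation \eqref{NLSeq} with the chosen $F$ for the $\epsilon^3 e^{\pm\mathrm{i}\theta}$ balance) all contributions of orders $\epsilon^1,\epsilon^2,\epsilon^3$ cancel. What remains is a finite sum of terms $\epsilon^n P(A,A_\tau,A_\xi,A_{\xi\xi},A_{\tau\tau},\ldots)\,e^{\mathrm{i}\ell\theta}$ with $n\geq 4$ and $P$ polynomial, each bounded in $L^\infty$ by $\epsilon^4$ times a constant using the uniform $C_b^{k-3}$ control on $A$; the hypothesis $k>4$ guarantees that the at most four spatial and two temporal derivatives arising in $P$ are controlled. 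The main obstacle is precisely the uniform-in-$\tau$ $H^k$ bound on $\phi$: the Gronwall argument inside the proof of Theorem \ref{globalexistence} gives only $\|\varphi\|_{H^k}e^{C\tau_m}$, which degenerates as $\tau_m\to\infty$. Closing this uniformly requires exploiting the dissipative term $-\alpha\|\partial_\xi^j\phi\|_{L^2}^2$ together with smallness of $|R|$ and of $\|\phi\|_{H^1}$, possibly shrinking $\delta_0$ beyond that of the energy-decay lemma, so that the damping strictly dominates the nonlinear commutators.
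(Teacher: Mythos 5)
Your proposal follows essentially the same route as the paper: decompose $A=e^{\mathrm{i}\nu\tau}(\phi+\eta)$, combine the global existence theorem, the energy decay and the embedding $H^{k}(\mathbb{R})\hookrightarrow C_b^{k-1}(\mathbb{R})$ to get uniform $L^\infty$ control of $A$, read $A_\tau$ (hence $X_t$) off equation (\ref{NLSHomoDE}) at the cost of two spatial derivatives to land in $C_b^{k-3}$, and bound $X$, $X_t$ and the residual through the explicit ansatz (\ref{AnsatzFunction}) using the algebra property of $L^\infty$ and the fact that the orders $\epsilon$, $\epsilon^2$, $\epsilon^3$ cancel by construction. The one place where you go beyond the paper is the uniform-in-$\tau$ bound on $\|\phi(\tau)\|_{H^k}$: the paper simply invokes Theorem \ref{globalexistence} and writes $\|\phi_\tau\|_{H^{k}}\leq C$, whereas you correctly note that global existence plus the Gronwall argument only yields a bound growing in time, and you sketch a higher-order energy estimate using the damping to make the constants valid on all of $[0,+\infty)$; this is a genuine strengthening of the written proof rather than a different method. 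Two caveats on that extra step: the ``smallness of $|R|$'' you invoke is not among the lemma's hypotheses (the constant-coefficient linear terms drop out of the imaginary-part pairing anyway, and the exponential decay of $E[\phi]$ is what should be exploited), and when you differentiate the $\eta$-dependent terms you pick up factors of $\kappa=k/\epsilon$ from $\eta(\xi)=Re^{-\mathrm{i}\kappa\xi}$, so absorbing the resulting contributions ``into the damping'' uniformly in $\epsilon$ requires tracking this $\epsilon$-dependence explicitly --- a subtlety that, to be fair, the paper's own proof also leaves unaddressed when it bounds derivatives of $A$ via the decomposition (\ref{decomposition}).
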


\begin{proof}
	
	From Theorem \ref{globalexistence}, we have $\phi \in C\left([0,+\infty), H^k(\mathbb{R})\right)$ for integer $k\geq 1$. Since $H^{k}(\mathbb{R})$ is embedded into $L^{\infty}(\mathbb{R})$ for $k \geq 1$, then using decomposition (\ref{decomposition}) and the fact that $A_{p}(\tau) \in L^{\infty}(\mathbb{R})$, we get
	\begin{equation}
	\|A(\tau)\|_{L^{\infty}(\mathbb{R})} \leq C_A.
	\end{equation}
	
	Since $L^{\infty}(\mathbb{R})$ is Banach algebra with respect to pointwise multiplication, then we can estimate equation (\ref{AnsatzFunction}),
	\begin{equation*}
	\|X(t)\|_{L^{\infty}(\mathbb{R})} \leq \epsilon \;C_{1}.
	\end{equation*}
	Since $k > 4$, $\|\phi_{\xi\xi}\|_{H^{k}(\mathbb{R})} \leq C\|\phi\|_{H^{k-2}(\mathbb{R})}$ then from equation (\ref{NLSHomoDE}), we get
	\begin{equation*}
	\|\phi_{\tau}\|_{H^k(\mathbb{R})} \leq C\:.
	\end{equation*}
	Thus, using decomposition (\ref{decomposition}), we have estimate for the first derivative of $A$ with respect to $\tau$ and conclude that
	\begin{equation*}
	\|X_{t}(t)\|_{L^{\infty}(\mathbb{R})} \leq \epsilon\;C_{2}\:,
	\end{equation*} 
	hence proved the first inequality.
	
	The residual terms consist of power of $A$ and higher derivative of $A$ up to second order (both time and space). Since $k > 4$, then using Sobolev embedding and equation (\ref{NLSHomoDE}), we get the bound for second derivative of $\phi$ with respect to both space and time. Then, using (\ref{decomposition}), we proved the second inequality.
	
	For the second part of the theorem, note that $A_p(\tau,\xi)$ is smooth and bounded function. Since $\phi(\tau) \in H^k(\mathbb{R})$ and $\phi_{\tau}(\tau) \in H^{k-2}(\mathbb{R})$ and $k>4$, then by Sobolev embedding we have
	\begin{equation*}
	\|\phi(\tau)\|_{C_b^m(\mathbb{R})} \leq C \|f\|_{H^k(\mathbb{R})} \:,
	\end{equation*} 
	for $k > m + 1$, hence we proved the theorem.
\end{proof}

We define the error term by writing $\epsilon^2 y(t) = u(t) + X(t)$, with $X(t)$ is a leading approximation term and $u(t)$ is the exact solution of our original equation. The evolution equation for the error term is given by
\begin{equation}
\label{errorEq}
\begin{cases}
y_{tt} + \epsilon^2 \alpha y_{t} - \beta y_{xx} + \gamma y - \lambda (\epsilon^4 y^3 + 3 X^2 y + 3 \epsilon^2 X y^2) + \epsilon^{-2}\mathrm{Res}(t) = 0,\\
y(0,\cdot) = f, \\
y_t(0,\cdot) = \epsilon g.
\end{cases}
\end{equation}
Since we expect $\epsilon$ is small, then we assume that $\gamma > \frac{\epsilon^2\alpha}{2}$. We prove that the error function $y(t)$ remains bounded over time. First, we can convert the differential equation (\ref{errorEq}) into integral equation \cite{ficken57},
\begin{equation}
\label{errorIntEq}
y(t,x) = \Phi[y](t,x) = A[f](t,x) + B[\epsilon g](t,x) + M\left[\lambda (y^3 + 3X^2 y + 3Xy^2) + \mathrm{Res}\right],
\end{equation}
where $A$, $B$ and $M$ are integral operators defined as
\begin{eqnarray*}
	A[f](t,x) & = & \frac{1}{2}e^{-\hat{\alpha}t} \left[f(x+t) + f(x-t) + \hat{\alpha} \int_{x-t}^{x+t}\:f(z)J_0(\epsilon w \zeta_0)\:dz + \int_{x-t}^{x+t}\:f(z)\frac{\partial J_0(\epsilon w \zeta_0)}{\partial t}\:dz \right], \\
	B[\epsilon g](t,x) & = & \frac{\epsilon}{2}e^{-\hat{\alpha}t}\int_{x-t}^{x+t}\:g(z)J_0(\epsilon w \zeta_0)\:dz, \\
	M[h](t,x) & = & -\frac{1}{2} \int_{0}^{t}\:\int_{x-t+s}^{x+t-s}\:e^{-\hat{\alpha}(t-s)}h(z,s)J_0(\epsilon w \zeta)\:dz\;ds,
\end{eqnarray*}
with $\hat{\alpha} = \epsilon^2\alpha/2$, $J_0$ is a zeroth order Bessel function, and
\begin{eqnarray*}
	\epsilon w & = & \sqrt{\gamma^2 - \hat{\alpha}^2},\\
	\zeta^2 & = & (t-s)^2 - (x-z)^2, \\
	\zeta_0^2 & = & t^2 - (x-z)^2.
\end{eqnarray*}
Since the intervals of integration are $s \in [0,t]$ and $z \in [x-t+s,x+t-s]$, then $\zeta^2 \geq 0$, thus we can set $\zeta \geq 0$.

Note that,
\begin{equation}
|J_n(z)| \leq \frac{1}{\Gamma(1+n)}\left(\frac{|z|}{2}\right)^n e^{\mathrm{Im}(z)}\:.
\end{equation}  
Since $w\zeta \in \mathbb{R}$, then we have,
\begin{eqnarray}
|J_0(\epsilon w\zeta)| & \leq & 1, \nonumber \\
\left|\frac{\partial J_0}{\partial t}(\epsilon w\zeta) \right| = \left|-\frac{\epsilon w}{\zeta}(t-s)J_1(\epsilon w\zeta)\right| & \leq & \frac{\epsilon^2 w^2}{2} |t-s|\:.
\end{eqnarray}
Using this two estimates, now we can estimate the integral operators
\begin{eqnarray}
\label{IntegralEstimate}
|A[f](t,x)| & \leq & e^{-\hat{\alpha}t}\|f\|_{L^{\infty}(\mathbb{R})}\left(1 + \hat{\alpha}t + \frac{\epsilon^2 w^2 t^2}{2}\right), \nonumber \\
|B[\epsilon g](t,x)| & \leq &  \epsilon e^{-\hat{\alpha}t} t \|g\|_{L^{\infty}(\mathbb{R})}, \nonumber \\
|M[h](t,x)| & \leq & \int_{0}^{t}\: e^{-\hat{\alpha}(t-s)} (t-s)\|h(s)\|_{L^{\infty}(\mathbb{R})}\;ds.
\end{eqnarray}
We assume that $\|f\|_{L^{\infty}(\mathbb{R})},\|g\|_{L^{\infty}(\mathbb{R})} \leq C_0$. Banach algebra property of ${L^{\infty}(\mathbb{R})}$ enables us to bound the nonlinear term for each $D>0$ and for all $\|y\|_{L^{\infty}(\mathbb{R})} \leq \;D$ and then we have 
\begin{eqnarray}
\|y(t)\|_{L^{\infty}(\mathbb{R})} & \leq &  \left[\left(1 + \hat{\alpha}t + \epsilon t + \frac{\epsilon^2 w^2 t^2}{2}\right) C_0 + \frac{\epsilon^2 t^2}{2} C_R + \frac{|\lambda| \epsilon^2 t^2}{2}\left( \epsilon^4 D^3 + \epsilon^3 C_X D^2\right) \right]\nonumber\\
& &\qquad + |\lambda|\epsilon^2 C_{X}^{2}\int_{0}^{t}\:(t-s)\|y(s)\|_{L^{\infty}(\mathbb{R})}\;ds\:,
\end{eqnarray}
where we already used the fact that $e^{-\hat{\alpha}t} \leq 1$ for $t \geq 0$. If $\epsilon > 0$ is sufficiently small, i.e. $\epsilon \in (0,\epsilon_0)$ with $\epsilon_0$ is a positive real constant, then for each $D>0$ and for every $\|y(t)\|_{L^{\infty}(\mathbb{R})} \leq D$, we can find a positive real  constant $M$ independent of $\epsilon$ such that,
\begin{equation*}
\frac{1}{2}\left(|\lambda|\epsilon^4 D^3 + |\lambda|\epsilon^3 C_X D^2\right) < M\:.
\end{equation*}
Thus, as long as we have $\|y(t)\|_{L^{\infty}(\mathbb{R})}$ staying in the ball of radius $D$, we have
\begin{equation*}
\|y(t)\|_{L^{\infty}(\mathbb{R})} \leq a(t) + \int_{0}^{t}\;b(s)\|y(s)\|_{L^{\infty}(\mathbb{R})} \;ds\;,
\end{equation*}
where 
\begin{equation}
a(t) = \left[\left(1 + \hat{\alpha}t + \epsilon t + \frac{\epsilon^2 w^2 t^2}{2}\right) C_0 +  \frac{\epsilon^2 t^2}{2} C_R + \epsilon^2 M t^2 \right]\;,\qquad b(s) = |\lambda|\epsilon^2 C_{X}^{2} (t-s)\:.
\end{equation}
The function $a(t)$ is continuous and non-decreasing and $b(t)$ is positive for $t \in [0,T_0/\epsilon]$. Then applying Gronwall's inequality we get
\begin{equation}
\|y(t)\|_{L^{\infty}(\mathbb{R})} \leq a(t) e^{|\lambda| C_X^2 \epsilon^2 t^2/2}\:.
\end{equation}
Therefore,
\begin{equation}
\|y(t)\|_{L^{\infty}(\mathbb{R})} \leq \left[\left(1 + \frac{\epsilon \alpha T_0}{2} + T_0 + \frac{w^2 T_0^2}{2}\right) C_0 + \frac{T_0^2}{2} C_R +  M T_0^2 \right] e^{|\lambda| C_X^2 T_0^2/2}\:.
\end{equation}

Let $C_y = \left(1  + T_0 + \frac{w^2 T_0^2}{2}\right) C_0 + C M T_0^2$ and $D = C_y e^{|\lambda| C_X^2 T_0^2/2}$ and make $\epsilon_0$ smaller such that $\epsilon < \frac{2M}{\alpha}T_0$, then we have
\begin{equation}
\|y(t)\|_{L^{\infty}(\mathbb{R})} \leq D\;,
\end{equation} 
for $t \in [0,T_0/\epsilon]$. Hence we proved the following theorem,
\begin{theorem}
	\label{MainTheorem}
	Let $A(\tau,\xi)$ is the solution of equation (\ref{NLSeq}) such that $A \in C^2\left([0,T_1],C_b^k(\mathbb{R})\right)$ for integer $k \geq 0$ and $X$ is leading approximation function (\ref{AnsatzFunction}). Let $u(t,x)$ be a solution of equation (\ref{NKGeq}). Then for each $T_0 < T_1$ and each $C_0 > 0$, there exist $\epsilon_0$ and $D>0$ such that for every $\epsilon \in (0,\epsilon_0)$ with
	\begin{equation*}
	\|u(0,\cdot) - X(0,\cdot)\|_{L^{\infty}(\mathbb{R})} \leq \epsilon^2 C_0,\qquad \|u(0,\cdot) - X(0,\cdot)\|_{L^{\infty}(\mathbb{R})} \leq \epsilon^3 C_0\;,
	\end{equation*}
	then the following inequality 
	\begin{equation*}
	\|u(t,\cdot) - X(t,\cdot)\|_{L^{\infty}(\mathbb{R})} \leq \epsilon^2 D
	\end{equation*}
	holds for $t \in [0,T_0/\epsilon]$.  
\end{theorem}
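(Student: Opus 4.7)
The plan is to reformulate the statement in terms of the rescaled error $y := (u - X)/\epsilon^2$, derive a Klein-Gordon-type evolution equation for $y$ with a small forcing, and then control $\|y(t)\|_{L^\infty(\mathbb{R})}$ on the long time window $[0, T_0/\epsilon]$ via a Duhamel representation combined with a Gronwall bootstrap. Under this rescaling the hypotheses on the initial data become $\|y(0,\cdot)\|_{L^\infty}, \|y_t(0,\cdot)\|_{L^\infty}/\epsilon \leq C_0$, which is $\mathcal{O}(1)$ in $\epsilon$, so the task is to show $\|y(t)\|_{L^\infty(\mathbb{R})}$ stays $\mathcal{O}(1)$ throughout.

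First I would substitute $u = X + \epsilon^2 y$ into \eqref{NKGeq}. Since the ansatz \eqref{AnsatzFunction} is engineered so that $\mathrm{Res} = \mathcal{O}(\epsilon^4)$, the leading orders cancel and what remains is precisely \eqref{errorEq}. The inhomogeneity $\epsilon^{-2}\mathrm{Res}(t)$ is then of order $\epsilon^2$ in $L^\infty$ by Lemma \ref{lemmaA}, while the variable-coefficient linear piece $3\lambda X^2 y$ carries a prefactor of order $\epsilon^2 C_X^2$ because $\|X\|_{L^\infty} \leq \epsilon C_X$. I would then convert \eqref{errorEq} into the integral representation \eqref{errorIntEq} using the explicit fundamental solution of the one-dimensional damped Klein-Gordon operator, whose kernel is a d'Alembert-type integral involving $J_0(\epsilon w \zeta)$. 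The elementary bounds $|J_0(z)| \leq 1$ and $|J_1(z)| \leq |z|/2$ for real $z$ deliver pointwise control of the three integral operators $A$, $B$, $M$ as recorded in \eqref{IntegralEstimate}.

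Next, imposing the bootstrap hypothesis $\|y(t)\|_{L^\infty(\mathbb{R})} \leq D$ and exploiting the Banach algebra property of $L^\infty(\mathbb{R})$ under pointwise multiplication, each cubic and quadratic nonlinearity in $X$ and $y$ is bounded in terms of $D$, $\epsilon$, and $C_X$; collecting the contributions gives a Volterra inequality of the form
\[
\|y(t)\|_{L^\infty(\mathbb{R})} \leq a(t) + |\lambda| \epsilon^2 C_X^2 \int_0^t (t-s)\|y(s)\|_{L^\infty(\mathbb{R})}\,ds,
\]
with $a(t)$ a quadratic polynomial whose coefficients depend on $C_0$, $C_R$, and $D$. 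Gronwall's inequality then yields $\|y(t)\|_{L^\infty(\mathbb{R})} \leq a(t)\,\exp(|\lambda| C_X^2 \epsilon^2 t^2/2)$.

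The main obstacle, and the reason the scaling ultimately works, is the long-time window $t \in [0,T_0/\epsilon]$: a priori, the quadratic polynomial $a(t)$ and the Gronwall exponential could each blow up, but on this interval one has $\epsilon t \leq T_0$ and $\epsilon^2 t^2 \leq T_0^2$, so both remain uniformly bounded in $\epsilon$ by constants depending only on $T_0$, $\alpha$, $C_0$, $C_R$, $C_X$, and $\lambda$. I would then choose $D$ strictly larger than the resulting $\epsilon$-independent bound and shrink $\epsilon_0$ further so that the additional cubic contributions (which carry extra factors of $\epsilon^2$ and $\epsilon^3$) are dominated. This closes the bootstrap by a standard continuity argument, and the desired conclusion $\|u(t,\cdot) - X(t,\cdot)\|_{L^\infty(\mathbb{R})} = \epsilon^2 \|y(t)\|_{L^\infty(\mathbb{R})} \leq \epsilon^2 D$ then holds on the full interval $[0,T_0/\epsilon]$.
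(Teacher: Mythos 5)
Your proposal is correct and follows essentially the same route as the paper: the rescaling $u = X + \epsilon^2 y$, the Ficken--Fleishman integral representation with the $J_0(\epsilon w\zeta)$ kernel, the $L^\infty$ Banach-algebra bounds leading to the Volterra inequality, and Gronwall on the window $[0,T_0/\epsilon]$ where $\epsilon t \leq T_0$ keeps everything $\mathcal{O}(1)$. Your explicit continuity/bootstrap step to fix $D$ and shrink $\epsilon_0$ is, if anything, a slightly more careful closing of the argument than the paper's presentation, but the substance is identical.
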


\section*{Acknowledgement}

The work of FTA is partly supported by Ministry of Research, Technology and Higher Education of the Republic of Indonesia through  PDUPT 2018.  FTA would like to thank The Abdus Salam ICTP for associateship in 2018. The authors, BEG and HS, acknowledge Ministry of Research, Technology and Higher Education of the Republic of Indonesia for partial financial supports through the World Class Professor 2018 program.

\end{document}